\documentclass[letterpaper, 10pt, conference]{ieeeconf}

\IEEEoverridecommandlockouts
\usepackage[utf8]{inputenc}
\usepackage[T1]{fontenc}

\usepackage{graphicx}

\usepackage{algorithm}
\usepackage{algorithmic}

\usepackage{amsmath}
\usepackage{amssymb}
\usepackage{mathtools}
\usepackage{sgame}

\usepackage{booktabs}

\usepackage{float}
\usepackage{xcolor}

\usepackage{cite}
\usepackage[capitalize]{cleveref}

\usepackage{enumerate}

\newtheorem{remark}{Remark}
\newtheorem{proposition}{Proposition}
\newtheorem{theorem}{Theorem}

\newtheorem{definition}{Definition}
\newtheorem{lemma}{Lemma}
\newtheorem{assumption}{Assumption}

\DeclareMathOperator{\Equaldef}{\overset{\mathrm{def}}{=}}

\title{\LARGE \bf
Graphon Design for Human-Machine Coordination 
under Bounded Rationality: Optimality of Stochastic Block Models}

\author{Zhewei Wang, Vu Anh Phi, and Marcos M. Vasconcelos%
\thanks{Z. Wang is with the Department of Mechanical Engineering, V. A. Phi and M.\,M. Vasconcelos are with the Department of Electrical
and Computer Engineering, FAMU-FSU College of Engineering,
Florida State University, Tallahassee, FL 32306, USA.
E-mails: \texttt{zw23a@fsu.edu,vap25@fsu.edu, m.vasconcelos@fsu.edu}.}}

\begin{document}

\maketitle
\thispagestyle{empty}
\pagestyle{empty}

\begin{abstract}

Coordination is a desirable feature in multi-agent systems, ranging from robotic swarms to socioeconomic networks. This paper is concerned with promoting coordination  among heterogeneous agents, e.g., machines and humans, interacting in a stag-hunt game. In our model the agents exhibit bounded rationality at different levels, which leads to uncertainty and a propensity for errors during learning and decision-making processes. This paper addresses the problem of designing a network topology that maximizes a global metric of coordination under such constraints. While optimizing over the discrete space of finite graphs is generally computationally intractable, we employ a mean-field approach to lift the problem into the space of graphons. Within this framework, we analyze agents following a logit learning dynamics. Using calculus of variations, we show that for systems with a bimodal rationality profile, it suffices to search for optimal graphons in the ensemble of stochastic block models. We then propose a \textit{water-filling} algorithm to find a locally optimal graphon. Finite graphs can then be sampled from the optimized graphon, bypassing the inherent combinatorial complexities of discrete graph optimization.

\end{abstract}

\section{Introduction}

Coordination is a recurring theme in distributed decision-making, with broad applications across robotics, sensing, and communication networks. Traditionally, coordination in engineering has been studied through the lens of consensus protocols, where agents continuously seek to align their decision variables. In contrast, researchers in economics study coordination through the resolution of \emph{social dilemmas}. The most famous example is the Prisoner's Dilemma (PD) \cite{BAHEL2022126}, a matrix game where decision-makers must choose whether to cooperate or defect. While the sole Nash equilibrium of the PD is mutual defection, a global reward incentive exists for cooperationâ€”hence the \textit{dilemma}. Although networked versions of the PD are widely studied across engineering and social sciences, the PD is not considered a coordination game because it has a single equilibrium.

An arguably richer model for networked coordination is the \textit{Stag Hunt} game \cite{skyrms2004stag}. In this game, which is characterized by two pure-strategy Nash equilibria, agents must decide whether to collaborate on a risky but highly rewarding task (hunting stag) or settle for a safe, low-rewarding alternative (hunting hare). The networked version of this game connects engineering, economics, and statistical physics, providing a rich framework to study the influence of network topology on the emergence of globally coordinated behavior. When considering a mix of human and machine agents, it is natural to assume agents with different \textit{rationality} levels. Here, we consider a rational agent as one that maximizes its expected utility, an assumption that is often taken for granted in the analysis of economic, artificial intelligence, and engineering systems \cite{recht2026irrational}.

Achieving effective coordination in hybrid human-machine networks is one of the fundamental challenges in modern multi-agent systems \cite{tsvetkova2024new}. For instance, Crandall et al. \cite{crandall2018cooperating} studied the inherent difficulties in sustaining long-term cooperation between humans and machines. Shirado and Christakis \cite{shirado2017locally} demonstrated that introducing autonomous agents can help large human groups overcome local deadlocks and promote global coordination. While these empirical results highlight the value of embedding autonomous agents within human networks, the literature currently lacks a rigorous mathematical framework to formalize this phenomenon and help us allocate the appropriate links between humans and machines to promote coordinated behavior thereby compensating for bounded and heterogeneous rationality levels. 

One possible approach is to use the log-linear learning formalism introduced in \cite{blume1993statistical} and adapt it to handle agents with heterogeneous rationality \cite{Rajab:2025}. 
However, the theoretical treatment of network optimization for heterogeneous systems quickly becomes intractable due to two challenges: (1) the combinatorial explosion of the discrete graph space, which is computationally prohibitive even for a modest number of agents (e.g., $N \approx 20$) \cite{zhang2024rationality,zhang2024role}; and (2) the absence of a closed-form stationary probability distribution when rationality is heterogeneously distributed among agents, which effectively precludes  optimization over graphs \cite{Rajab:2025}.

In an effort to overcome these fundamental difficulties, we adopt a mean-field approach based on \textit{graphons} \cite{parise2023graphon} -- continuous objects that represent the limit of large-scale networks for a continuum of agents. By relaxing the problem from a finite agent set to a continuum population, we circumvent the inherent combinatorial complexity of the network design problem. Rather than attempting to optimize the stationary distribution of a discrete Markov chain induced by logit learning \cite{marden2012revisiting}, we reformulate the objective to maximize the expected fraction of agents that asymptotically play the equilibrium that maximizes the game's potential function subject to a constraint on the admissible graphons. Our approach entirely bypasses the need for a closed-form stationary distribution used in \cite{zhang2024rationality,zhang2024role,wang2026optimizing}.

Here, we focus on network design for hybrid systems characterized by a bimodal rationality profile. 
To address this, we provide an analytical characterization of the optimal topology for a system with two types of agents. By solving the optimization problem in the continuous graphon space, we show that it suffices to restrict our search for optimal graphons to the stochastic block model (SBM) ensemble \cite{holland1983stochastic, abbe2018community}. We then propose a water-filling algorithm to compute the locally optimal graphon profile.

The rest of the paper is organized as follows. \Cref{sec:setup} introduces the problem setup and the graphon coordination game. \Cref{sec:logit} formulates the logit learning dynamics for heterogeneous rationality, defines the aggregate adoption objective, and establishes the uniqueness of the steady state under a contraction condition. \Cref{sec:optimality} derives the first-order optimality conditions via a Lagrangian approach. \Cref{sec:bimodal} specializes these conditions to a bimodal rationality profile and proves that every local maximizer is a stochastic block model. \Cref{sec:allocation} develops a greedy connectivity allocation algorithm. \Cref{sec:numerical} illustrates the algorithm numerically. \Cref{sec:conclusions} discusses extensions and open problems.

\section{Problem Setup}
\label{sec:setup}

Consider a population of unit mass $\mathcal{P} = [0,1]$
partitioned into two types. A fraction $\ell \in (0,1)$ are
\emph{humans}, indexed by $x \in [0,\ell)$, and a fraction $1-\ell$
are \emph{machines}, indexed by $x \in [\ell,1]$. Agents of the
same type are identical. The interaction structure among the agents is described by a
\emph{graphon} $W:[0,1]^2\to[0,1]$, a symmetric and measurable
function encoding the connection probability between any two
agents. We restrict to graphons with a fixed edge density
$\rho \in (0,1)$. Let the set of graphons of density $\rho$ be defined as
\begin{multline}
  \mathcal{W}_\rho \;\Equaldef\;
  \Big\{
    W \in L^\infty([0,1]^2)
   \mid 
    W(x,y) = W(y,x) \in [0,1],\;  \\
    \int_0^1\int_0^1 W(x,y)\,dx\,dy = \rho
  \Big\}.
\end{multline}

\subsection{Graphon Coordination Game}

Fix a difficulty parameter $\theta \in (0,1)$. Each pair of agents
$x, y \in \mathcal{P}$ plays a binary \textit{stag hunt} game: $1$
represents the risky action (hunting stag) and $0$ represents
the safe action (hunting hare) \cite{skyrms2004stag,Montanari:2010}. 

\begin{figure}[h!]
  \hspace*{\fill}%
  \centering
  \begin{game}{2}{2}[$a(x)$][$a(y)$]
       & $1$ & $0$  \\
   $1$ & $\bigl(1-\theta,\,1-\theta\bigr)$ & $\bigl(-\theta,\,0\bigr)$ \\
   $0$ & $\bigl(0,\,-\theta\bigr)$         & $\bigl(0,\,0\bigr)$       \\
  \end{game}%
  \hspace*{\fill}%
  \caption{Stag hunt coordination game between agents $x, y \in \mathcal{P}$
         with difficulty parameter $\theta \in (0,1)$.}
  \label{fig:bimatrix}
\end{figure}

Let $\mathcal{A}$ denote the set of pure strategy profiles,
\begin{equation}
    \mathcal{A} = \bigl\{\, a \in L^1([0,1]) : [0,1] \to \{0,1\}\,\bigr\}.
\end{equation}
The game's payoff in \cref{fig:bimatrix} can be compactly written as
\begin{equation}
  u_x\bigl(a(x),\,a(y)\bigr)
  = a(x)\bigl(a(y) - \theta\bigr).
\end{equation}
Because each agent plays the same action with the entire population,
the utility is aggregated and weighted by the graphon, which gives
\begin{equation}
  \label{eq:utility}
  u_x\bigl(a(x),\,a\bigr)
  = a(x)\int_0^1 W(x,y)\,\bigl(a(y) - \theta\bigr)\,dy.
\end{equation}

\begin{definition}[Nash Equilibrium]
A strategy profile $a^\star \in \mathcal{A}$ is a
\emph{Nash equilibrium} if, for all alternative actions $s \in \{0,1\}$,
\begin{equation}
  u_x\bigl(a^\star(x),\,a^\star\bigr)
  \;\geq\;
  u_x\bigl(s,\,a^\star\bigr)
  \quad \text{a.e.}
\end{equation}
\end{definition}

\vspace{5pt}

\begin{proposition}
\label{prop:consensus_equilibria}
Let $W : [0,1]^2 \to [0,1]$ be a graphon and let $\theta \in (0,1)$.
For the graphon game defined by \eqref{eq:utility}, the constant strategy
profiles $a^\star(x) = 0$ a.e.\ and $a^\star(x) = 1$ a.e.\ are both
Nash equilibria.
\end{proposition}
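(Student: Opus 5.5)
The plan is a direct verification from the definition of Nash equilibrium, with one case per constant profile. The key observation is that the utility \eqref{eq:utility} is linear in the own action $a(x)$:
\begin{equation*}
u_x(s,a) = s\,\Phi_a(x), \qquad \Phi_a(x) \Equaldef \int_0^1 W(x,y)\bigl(a(y)-\theta\bigr)\,dy .
\end{equation*}
So $a(x)$ is a best response at $x$ exactly when $a(x)=1$ and $\Phi_a(x)\ge 0$, or $a(x)=0$ and $\Phi_a(x)\le 0$. It therefore suffices to compute the sign of $\Phi_{a^\star}(x)$ for almost every $x$ in each case.

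\emph{Case $a^\star\equiv 0$ a.e.} Here $a^\star(y)-\theta=-\theta$ for a.e.\ $y$, so
\begin{equation*}
\Phi_{a^\star}(x) = -\theta\int_0^1 W(x,y)\,dy \le 0 .
\end{equation*}
This uses $W\ge 0$ and $\theta>0$. Hence $u_x(0,a^\star)=0\ge s\,\Phi_{a^\star}(x)=u_x(s,a^\star)$ for $s\in\{0,1\}$.

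\emph{Case $a^\star\equiv 1$ a.e.} Here
\begin{equation*}
\Phi_{a^\star}(x) = (1-\theta)\int_0^1 W(x,y)\,dy \ge 0 ,
\end{equation*}
using $\theta<1$. Hence $u_x(1,a^\star)=\Phi_{a^\star}(x)\ge s\,\Phi_{a^\star}(x)$ for $s\in\{0,1\}$.

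The only step needing any care is measure-theoretic, and it is mild. By Fubini, since $W\in L^\infty$ is measurable, the section $y\mapsto W(x,y)$ is measurable and integrable for a.e.\ $x$. On that full-measure set the degree $\int_0^1 W(x,y)\,dy\in[0,1]$ is well defined. Also, changing $a^\star$ on a null set does not affect the integral. The inequalities therefore hold for a.e.\ $x$, which is exactly what the definition of Nash equilibrium requires.

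Note that neither symmetry of $W$ nor the density constraint is needed. Agents with zero degree are indifferent between the two actions, and the weak inequality still holds for them.
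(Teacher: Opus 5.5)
Your proof is correct. The paper omits its proof as ``immediate,'' and your direct check is exactly that intended verification: write $u_x(s,a^\star)=s\int_0^1 W(x,y)\bigl(a^\star(y)-\theta\bigr)\,dy$, then observe that the integral equals $-\theta D(x)\le 0$ for the all-zero profile and $(1-\theta)D(x)\ge 0$ for the all-one profile, with the a.e.\ qualifications handled properly.
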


\vspace{5pt}

\begin{proof}
The proof is immediate and omitted for brevity.
\end{proof}

\subsection{Potential Game Structure}

A key structural property of our graphon coordination game is that it
is a \emph{potential game} \cite{monderer1996potential}.
The potential function is
\begin{multline}
  \label{eq:potential}
  \Phi_{W}(a)
  \;\Equaldef\;
  \frac{1}{2}\int_0^1\!\int_0^1 W(x,y)\,a(x)\,a(y)\,dx\,dy
  \\
  -\;\theta\int_0^1 a(x)\,D(x)\,dx,
\end{multline}
where the \emph{connectivity degree} of agent $x$ is
\begin{equation}
  D(x) \;\Equaldef\; \int_0^1 W(x,y)\,dy.
\end{equation}

To verify that \eqref{eq:potential} is a potential function, we can show that the G\^{a}teaux derivative \cite{liberzon2012calculus} of $\Phi_W$ with respect to $a$ at $x$ is  $u_x(1,a) - u_x(0,a)$. 

Although the graphon coordination game may admit many other equilibria than the
perfectly aligned profiles $a^\star(x) = 1$ and $a^\star(x) = 0$, the following proposition establishes that these two profiles are the \emph{only} global maximizers of
$\Phi_W$.

\vspace{5pt}

\begin{proposition}[Potential Maximizers]
\label{prop:potential_max}
Let $W$ be a symmetric graphon satisfying
$\int_0^1\!\int_0^1 W(x,y)\,dx\,dy > 0$, and let $\theta \in (0,1)$.
The global maximum of $\Phi_W(a)$ defined in \eqref{eq:potential} is
attained as follows
\begin{enumerate}[(i)]
  \item If $\theta < 1/2$, the unique maximizer is $a^\star(x) = 1$ a.e.
  \item If $\theta > 1/2$, the unique maximizer is $a^\star(x) = 0$ a.e.
  \item If $\theta = 1/2$, both $a^\star(x) = 1$ and $a^\star(x) = 0$ maximize
        $\Phi_W(a)$.
\end{enumerate}
\end{proposition}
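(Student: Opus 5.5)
The plan is a direct computation on indicator sets. Every $a\in\mathcal{A}$ is, up to null sets, an indicator $a=\mathbf{1}_S$ for a measurable $S\subseteq[0,1]$. So I will rewrite $\Phi_W$ as a function of $S$ and compare it with the two constant profiles. Introduce the edge-mass notation $e(A,B)\Equaldef\int_A\int_B W(x,y)\,dy\,dx$, which is symmetric in $(A,B)$ because $W$ is. Then $\int_S D(x)\,dx=e(S,[0,1])=e(S,S)+e(S,S^c)$, and \eqref{eq:potential} becomes
\begin{equation*}
\Phi_W(\mathbf{1}_S)=\bigl(\tfrac12-\theta\bigr)e(S,S)-\theta\, e(S,S^c).
\end{equation*}
In particular $\Phi_W(0)=0$ and $\Phi_W(1)=(\tfrac12-\theta)\bar\rho$, where $\bar\rho=\int\!\int W>0$.

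The three cases then follow from elementary comparisons, using the decomposition $\bar\rho=e(S,S)+2e(S,S^c)+e(S^c,S^c)$.

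\emph{Case $\theta<1/2$.} I compute
\begin{equation*}
\Phi_W(1)-\Phi_W(\mathbf{1}_S)=\bigl(\tfrac12-\theta\bigr)\bigl(2e(S,S^c)+e(S^c,S^c)\bigr)+\theta\, e(S,S^c)\ \ge 0.
\end{equation*}
Equality holds iff $e(S,S^c)=e(S^c,S^c)=0$, i.e.\ $\int_{S^c}D=0$. Since $\bar\rho>0$, this already excludes $S=\emptyset$, so $a\equiv 0$ is not a maximizer.

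\emph{Case $\theta>1/2$.} Both coefficients $\tfrac12-\theta$ and $-\theta$ are negative, so $\Phi_W(\mathbf{1}_S)\le 0=\Phi_W(0)$. Equality holds iff $e(S,S)=e(S,S^c)=0$, i.e.\ $\int_S D=0$. This excludes $S=[0,1]$.

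\emph{Case $\theta=1/2$.} Here $\Phi_W(\mathbf{1}_S)=-\tfrac12 e(S,S^c)\le 0=\Phi_W(0)=\Phi_W(1)$, so both constant profiles are maximizers.

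The main obstacle is the uniqueness claim in (i) and (ii). The equality conditions only force the deviating set to carry zero degree mass. If $W$ has isolated agents, i.e.\ a set of positive measure on which $D=0$, then flipping their actions leaves $\Phi_W$ unchanged, and strict uniqueness a.e.\ fails. I would handle this in one of two ways. The first is to state uniqueness modulo agents with $D(x)=0$. The second, which I prefer, is to add the mild hypothesis $D>0$ a.e. Under that hypothesis, $\int_{S^c}D=0$ forces $|S^c|=0$ in (i), and $\int_S D=0$ forces $|S|=0$ in (ii), which yields the claimed uniqueness. I will flag this explicitly, since the statement as written assumes only $\bar\rho>0$.
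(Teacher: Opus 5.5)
Your proof is correct. The paper gives no argument to compare against: it calls the proof immediate and omits it. Your reduction to indicator sets, $\Phi_W(\mathbf{1}_S)=(\tfrac12-\theta)e(S,S)-\theta\,e(S,S^c)$, combined with $\bar\rho=e(S,S)+2e(S,S^c)+e(S^c,S^c)$, is the natural way to supply it. The three comparisons and their equality conditions all check out. You might also read the maximization over mixed profiles $\mathcal{M}$ rather than $\mathcal{A}$. The same conclusion then follows from the bound $\iint W(x,y)a(x)a(y)\,dx\,dy\le\int aD$ for $0\le a\le 1$. This gives $\Phi_W(a)\le(\tfrac12-\theta)\int aD$, and, with $b=1-a$, $\Phi_W(1)-\Phi_W(a)\ge(\tfrac12-\theta)\int bD$.

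Your caveat on uniqueness is not a limitation of your method; it is a genuine error in the statement as written. Take $W=\mathbf{1}_{[0,1/2]^2}$, so $\bar\rho=1/4>0$. When $\theta<1/2$, the profile $\mathbf{1}_{[0,1/2]}$ attains $\Phi_W=\tfrac14(\tfrac12-\theta)=\Phi_W(1)$. When $\theta>1/2$, the profile $\mathbf{1}_{(1/2,1]}$ attains $0=\Phi_W(0)$. This degenerate case is not exotic for this paper. The locally optimal SBMs reported for $\rho\le 0.25$ have $w_{HM}=w_{HH}=0$, so $D\equiv 0$ on the human block, and uniqueness fails for exactly those graphons. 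Your equality conditions ($\int_{S^c}D=0$, resp.\ $\int_S D=0$) describe the maximizer sets exactly. The right repair is therefore the one you propose: assume $D>0$ a.e., or state uniqueness only on $\{D>0\}$. Connectivity of $W$ is not needed; for example, two disjoint complete blocks already give uniqueness.
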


\vspace{5pt}

\begin{proof}
The proof is immediate and omitted for brevity.
\end{proof}

\section{Logit Learning Dynamics}\label{sec:logit}

We are interested in optimizing the graphon $W$ to influence the behavior of boundedly rational agents in a mixed population of humans and machines. We assume that agents of each type have different levels of rationality. To model the evolution of behavior in the continuum population, we
introduce a continuous-time \emph{logit learning dynamic} \cite{sandholm2010population,Cianfanelli:2025}.

The space of mixed strategy profiles is denoted by
\begin{equation}
\mathcal{M} = \big\{a \in L^1([0,1])\ : [0,1] \to [0,1] \big\}.
\end{equation}
Let $a_t(x) \in [0,1]$ 
denote the probability that agent $x$ plays action $1$ at time $t \geq 0$. 
Under independent Poisson revision clocks, each agent updates its action according to the logit rule. Given the current population profile $a_t$, the incentive for agent
$x$ to play strategy $1$ over strategy $0$ is
\begin{align}
  \Delta u_x(a_t)
  &\;\Equaldef\; u_x(1, a_t) - u_x(0, a_t) \notag \\
  &= \int_0^1 W(x,y)\,\bigl(a_t(y) - \theta\bigr)\,dy.
  \label{eq:difference}
\end{align}

Under the logit learning dynamic, agents do not best-respond 
perfectly; instead, they choose actions with probabilities 
proportional to their expected utility, governed by a type dependent
rationality parameter $\beta(x) \in [0,\infty)$ according to the logit kernel
\begin{align}\label{eq:logit_dynamic}
  \sigma_{\beta(x)}\!\bigl(\Delta u_x(a_t)\bigr)
  &\Equaldef \frac{1}{1 + \exp\bigl(-\beta(x)\,\Delta u_x(a_t)\bigr)}.
\end{align}
As $\beta(x) \to 0$, agents choose uniformly at random; as
$\beta(x) \to \infty$, they best-respond perfectly.

In our model, the rationality profile $\beta(x)$ is assumed to be of the following form
\begin{equation}
  \label{eq:beta_bimodal}
  \beta(x) =
  \begin{cases}
    \beta_H & x \in [0,\ell) \quad \text{(humans)} \\
    \beta_M & x \in [\ell,1] \quad \text{(machines)}
  \end{cases}
\end{equation}
where $0 < \beta_H < \beta_M$. Following a standard argument from the population games literature \cite{sandholm2010population}, the mixed action profile
evolves as
\begin{equation}
  \frac{\partial a_t(x)}{\partial t}
  = \sigma_{\beta(x)}\!\left(
      \int_0^1 W(x,y)\,\bigl(a_t(y)-\theta\bigr)\,dy
    \right)
  - a_t(x).
\end{equation}
Since the underlying game is a potential game, the logit dynamics
\eqref{eq:logit_dynamic} admits a strict Lyapunov functional. As shown in \cite{Cianfanelli:2025}, this guarantees asymptotic
convergence to a steady-state profile $a_\infty(x)$.
Setting $\partial a_t(x)/\partial t = 0$ yields
\begin{equation}
  \label{eq:fixed_point}
  a_\infty(x)
  = \sigma_{\beta(x)}\!\left(
      \int_0^1 W(x,y)\,\bigl(a_\infty(y)-\theta\bigr)\,dy
    \right)
  \quad \text{a.e.}
\end{equation}

\begin{remark}
\Cref{eq:fixed_point} is akin to a mean-field version of a \emph{Quantal Response Equilibrium} \cite{GoereeHoltPalfrey2016}, which is a generalized version of a Nash Equilibrium for agents exhibiting bounded rationality.
\end{remark}

\subsection{Optimization Problem}

For $\theta < 1/2$, the action profile $a^\star(x) = 1$ a.e.\ is the 
socially desirable equilibrium by \cref{prop:potential_max}. Therefore, 
a natural performance metric is the asymptotic adoption fraction
\begin{equation}
  J_1(W) \;\Equaldef\; \int_0^1 a_\infty(x)\,dx,
\end{equation}
whose maximization is equivalent to minimizing the $L^1$ distance between 
$a^\star$ and $a_\infty$. To see this, note that since $a^\star(x) = 1$ a.e.\ and 
$a_\infty(x) \in (0,1)$, we have
\begin{equation}
  \|a^\star - a_\infty\|_{1} 
  = \int_0^1 |1 - a_\infty(x)|\,dx 
  = 1 - J_1(W).
\end{equation}

For the remainder of the paper, we will consider the following graphon optimization problem 
\begin{equation}
  \label{eq:opt}
  \max_{W \in \mathcal{W}_\rho}\;
  J_1(W)
\end{equation}
subject to the nonlinear equality constraint imposed by \eqref{eq:fixed_point}.

\subsection{Uniqueness of the steady state}

Although the potential game structure guarantees convergence to a fixed point, the logit dynamics is notorious for their multiplicity when $\beta(x)$ is too large \cite{Cianfanelli:2025}.
We impose the following condition to guarantee a unique steady
state $a_\infty$ for every $W \in \mathcal{W}_\rho$ and the well-posedness of the optimization problem.

\vspace{5pt}

\begin{assumption}
\label{ass:contraction}
The rationality profile $\beta:[0,1]\to(0,\infty)$ and
graphon $W \in \mathcal{W}_\rho$ satisfy
\begin{equation}
  \label{eq:contraction_condition}
  \sup_{x \in [0,1]}\,\beta(x)\,D(x) < 4,
\end{equation}
where $D(x) = \int_0^1 W(x,y)\,dy$ is the degree of
agent $x$.
\end{assumption}

\vspace{5pt}

\begin{proposition}
\label{prop:uniqueness}
Under \cref{ass:contraction}, the operator
\begin{equation}
  \mathcal{F}(a)(x)
  \;\Equaldef\;
  \sigma_{\beta(x)}\!\left(
    \int_0^1 W(x,y)\bigl(a(y)-\theta\bigr)\,dy
  \right)
\end{equation}
is a contraction on $(\mathcal{M},\|\cdot\|_{\infty})$
with constant
\begin{equation}
  \kappa
  \;\Equaldef\;
  \frac{1}{4}\sup_{x \in [0,1]}\beta(x)\,D(x)
  \;<\; 1.
\end{equation}
In particular, \eqref{eq:fixed_point} has a unique solution
$a_\infty \in \mathcal{M}$.
\end{proposition}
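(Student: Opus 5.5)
The plan is to verify the two hypotheses of the Banach fixed-point theorem directly. First, $(\mathcal{M},\|\cdot\|_\infty)$ is a complete metric space and $\mathcal{F}$ maps $\mathcal{M}$ into itself. Second, $\mathcal{F}$ is Lipschitz with constant $\kappa<1$.

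\textbf{Self-map and completeness.} For $a\in\mathcal{M}$, the map $x\mapsto \int_0^1 W(x,y)(a(y)-\theta)\,dy$ is measurable by Fubini, since $W$ is bounded and measurable and $a$ is bounded. It is also bounded by $D(x)\le 1$ in absolute value. The profile $\beta$ is a simple (two-valued) measurable function, and $\sigma_\beta(z)=1/(1+e^{-\beta z})$ is jointly continuous in $(\beta,z)$. Hence $\mathcal{F}(a)$ is measurable with values in $(0,1)\subset[0,1]$, so $\mathcal{F}(a)\in\mathcal{M}$. Completeness holds because $\mathcal{M}$ is the set of $L^\infty$ functions with values in $[0,1]$ a.e. This set is closed in $L^\infty([0,1])$: a uniform a.e.\ limit of functions valued in $[0,1]$ is again valued in $[0,1]$ a.e. Throughout, I will work with equivalence classes and essential suprema, interpreting $\sup_x$ in the assumption and in $\kappa$ accordingly.

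\textbf{Lipschitz estimate.} Fix $x$ and set $z_i(x)=\int_0^1 W(x,y)(a_i(y)-\theta)\,dy$ for $a_1,a_2\in\mathcal{M}$. The derivative of $\sigma_\beta$ is $\sigma_\beta'(z)=\beta\,\sigma_\beta(z)(1-\sigma_\beta(z))$. Since $s(1-s)\le 1/4$, we get $|\sigma_\beta'|\le\beta/4$. By the mean value theorem,
\begin{equation*}
|\mathcal{F}(a_1)(x)-\mathcal{F}(a_2)(x)|\le \frac{\beta(x)}{4}\,|z_1(x)-z_2(x)|.
\end{equation*}
The $\theta$ terms cancel in the difference, leaving
\begin{equation*}
|z_1(x)-z_2(x)|=\Bigl|\int_0^1 W(x,y)\bigl(a_1(y)-a_2(y)\bigr)\,dy\Bigr|\le D(x)\,\|a_1-a_2\|_\infty,
\end{equation*}
where we used $W\ge 0$. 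Combining the two bounds and taking the essential supremum over $x$ gives $\|\mathcal{F}(a_1)-\mathcal{F}(a_2)\|_\infty\le \kappa\,\|a_1-a_2\|_\infty$. Here $\kappa=\tfrac14\sup_x\beta(x)D(x)<1$ by \cref{ass:contraction}.

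\textbf{Conclusion and main obstacle.} Banach's theorem now gives a unique fixed point $a_\infty\in\mathcal{M}$, which is exactly the unique solution of \eqref{eq:fixed_point}. The argument is essentially routine. The only step needing care is the measure-theoretic bookkeeping: making sure $\mathcal{F}(a)$ is a well-defined element of $\mathcal{M}$ (measurability via Fubini) and reading the pointwise supremum in \eqref{eq:contraction_condition} as an essential supremum, so that the bound holds a.e. The analytic core, the bound $\sigma_\beta'\le\beta/4$ combined with $W\ge0$, is straightforward.
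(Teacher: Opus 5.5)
Your proposal is correct and follows essentially the same route as the paper: the bound $|\sigma_\beta'|\le\beta/4$, the triangle inequality with $W\ge 0$ to produce the factor $D(x)$, and Banach's fixed-point theorem on $\mathcal{M}$ as a closed subset of $L^\infty([0,1])$. Your measurability argument via Fubini and your reading of $\sup_x$ as an essential supremum are refinements the paper leaves implicit, but they do not change the argument.
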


\vspace{5pt}

\begin{proof}
Since $\sigma_{\beta(x)}$ maps $\mathbb{R}$ into
$(0,1) \subset [0,1]$ for any $\beta(x) > 0$, we have
$\mathcal{F}(a)(x) \in (0,1)$ for all $x$, so
$\mathcal{F}:\mathcal{M}\to\mathcal{M}$. Next, we derive the Lipschitz constant of $\sigma_{\beta(x)}$.
Since
\begin{equation}
  \sigma_{\beta(x)}'(z)
  = \beta(x)\,\sigma_{\beta(x)}(z)
    \bigl(1-\sigma_{\beta(x)}(z)\bigr),
\end{equation}
and $s(1-s) \leq 1/4$ for all $s \in [0,1]$, we obtain
\begin{equation}
  \bigl|\sigma_{\beta(x)}'(z)\bigr|
  \leq \frac{\beta(x)}{4},
  \qquad z \in \mathbb{R},\; x \in [0,1].
\end{equation}
Therefore, for any $a, b \in \mathcal{M}$ and
a.e.\ $x \in [0,1]$, we have
\begin{align}
  \bigl|\mathcal{F}(a)(x) - \mathcal{F}(b)(x)\bigr|
  &\overset{(a)}{\leq}
    \frac{\beta(x)}{4}
    \left|\int_0^1 W(x,y)\bigl(a(y)-b(y)\bigr)\,dy\right|
    \notag \\
  &\overset{(b)}{\leq}
    \frac{\beta(x)}{4}
    \int_0^1 W(x,y)\,|a(y)-b(y)|\,dy \notag \\
  &\overset{(c)}{\leq}
    \frac{\beta(x)}{4}\,D(x)\,\|a-b\|_{\infty},
\end{align}
where $(a)$ follows from the Lipschitz bound on
$\sigma_{\beta(x)}$;
$(b)$ follows from the triangle inequality for integrals;
and $(c)$ follows from the definition of the $L^\infty$ norm together with the definition of degree.

Taking the supremum over $x \in [0,1]$, we have
\begin{equation}
  \|\mathcal{F}(a) - \mathcal{F}(b)\|_{\infty}
  \leq
  \underbrace{\frac{\sup_x\beta(x)\,D(x)}{4}}_{\kappa}
  \|a - b\|_{\infty}.
\end{equation}
Under \cref{ass:contraction}, $\kappa < 1$ and $\mathcal{F}$
is a contraction on $\mathcal{M}$. Since $\mathcal{M}$ is a
closed subset of $L^\infty([0,1])$, 
$(\mathcal{M}, \|\cdot\|_{\infty})$ is complete. The Banach
fixed-point theorem therefore guarantees a unique
$a_\infty \in \mathcal{M}$ satisfying
$\mathcal{F}(a_\infty) = a_\infty$.
\end{proof}

\vspace{5pt}

\begin{proposition}
\label{prop:uniqueness_simple}
If $\beta(x) < 4$ for all $x \in [0,1]$, then the
hypothesis of \cref{prop:uniqueness} holds for every
$W \in \mathcal{W}_\rho$, and consequently
\eqref{eq:fixed_point} has a unique solution
$a_\infty \in \mathcal{M}$ for every $W \in \mathcal{W}_\rho$.
\end{proposition}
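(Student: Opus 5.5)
The plan is to verify \cref{ass:contraction} directly from the pointwise bound on $\beta$ and the trivial bound on the degree that holds for every graphon in $\mathcal{W}_\rho$, and then invoke \cref{prop:uniqueness}.

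First, bound the degree. For any $W \in \mathcal{W}_\rho$ we have $0 \le W(x,y) \le 1$, so $D(x) = \int_0^1 W(x,y)\,dy \le 1$ for a.e.\ $x$. Since $W$ is only defined up to null sets, I will work with the essential supremum, or pick a representative with values in $[0,1]$ everywhere. Then $D(x) \le 1$ for every $x$, uniformly over the whole admissible set $\mathcal{W}_\rho$ and independently of $\rho$.

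Second, bound the product. The rationality profile \eqref{eq:beta_bimodal} takes only the two values $\beta_H$ and $\beta_M$. The hypothesis $\beta(x) < 4$ for all $x$ therefore gives $\sup_x \beta(x) = \max\{\beta_H,\beta_M\} = \beta_M < 4$. This step is where the strictness of the inequality must survive the supremum. For a general profile, pointwise $\beta(x)<4$ would not imply $\sup_x \beta(x) < 4$. Here it does, because $\beta$ is piecewise constant with finitely many values. Combining the two bounds,
\begin{equation*}
  \sup_{x\in[0,1]} \beta(x)\,D(x) \;\le\; \sup_{x\in[0,1]} \beta(x) \;=\; \beta_M \;<\; 4,
\end{equation*}
which is exactly \eqref{eq:contraction_condition}. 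This holds for every $W \in \mathcal{W}_\rho$, so \cref{ass:contraction} is satisfied.

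Finally, apply \cref{prop:uniqueness}. It shows that $\mathcal{F}$ is a contraction on $(\mathcal{M},\|\cdot\|_\infty)$ with constant $\kappa \le \beta_M/4 < 1$, and the Banach fixed-point theorem then yields a unique $a_\infty\in\mathcal{M}$ solving \eqref{eq:fixed_point}.

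The only real obstacle is the supremum subtlety above. If one wanted the statement for a general measurable $\beta$, the hypothesis would have to be strengthened to $\operatorname{ess\,sup}_x \beta(x) < 4$; I would note this in a remark. Everything else is immediate.
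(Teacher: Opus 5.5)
Your proof is correct and follows the paper's route: bound $D(x)\le 1$ for every $W\in\mathcal{W}_\rho$, deduce $\sup_x \beta(x)D(x)\le \sup_x\beta(x)<4$, and invoke \cref{prop:uniqueness}. Your care at the supremum step is warranted. The paper's proof passes from the pointwise hypothesis $\beta(x)<4$ to $\sup_x\beta(x)<4$ without comment, which is valid only because the bimodal profile \eqref{eq:beta_bimodal} takes finitely many values; a general measurable profile would need a uniform bound strictly below $4$.
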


\vspace{5pt}

\begin{proof}
Since $D(x) = \int_0^1 W(x,y)\,dy \leq \int_0^1 1\,dy = 1$
for any $W \in \mathcal{W}_\rho$, we have
\begin{equation}
  \sup_{x \in [0,1]} \beta(x)\,D(x)
  \leq  \sup_{x \in [0,1]} \beta(x)
  < 4.
\end{equation}
The result follows directly from \cref{prop:uniqueness}.
\end{proof}

\section{First-Order Optimality Conditions}
\label{sec:optimality}

Treating the optimization problem from the point of
view of calculus of variations \cite{liberzon2012calculus}, we proceed by
characterizing its first-order optimality conditions.

\subsection{Lagrangian and Optimality Conditions}

\begin{lemma}
\label{lem:first_order_condition}
Let $W^\star \in \mathcal{W}_\rho$ be a local maximizer
of $J_1(W) = \int_0^1 a_\infty(x)\,dx$, with associated
steady-state profile $a^\star$ satisfying
\eqref{eq:fixed_point} and local externality
$z^\star(x) = \int_0^1 W^\star(x,y)(a^\star(y)-\theta)
\,dy$. Let $\mu^\star:[0,1]\to\mathbb{R}$ be the unique
solution of the following integral equation:
\begin{equation}
  \label{eq:adjoint}
  \mu^\star(x)
  = \sigma_{\beta(x)}'\big(z^\star(x)\big)
    \!\left(
      1 + \int_0^1 W^\star(x,y)\,\mu^\star(y)\,dy
    \right)
  \quad \text{a.e.,}
\end{equation}
and define the symmetric function:
\begin{equation}
  \label{eq:switching}
  G^\star(x,y)
  \;\Equaldef\;
  \mu^\star(x)\big(a^\star(y)-\theta\big)
  + \mu^\star(y)\big(a^\star(x)-\theta\big).
\end{equation}
Then the following holds:
\begin{enumerate}[(i)]
  \item $\mu^\star(x) > 0$ for all $x \in [0,1]$.
  \item For every $V \in \mathcal{W}_\rho$, we have
  \begin{equation}
    \label{eq:vi}
    \int_0^1\!\int_0^1 G^\star(x,y)
    \bigl(V(x,y) - W^\star(x,y)\bigr)\,dx\,dy \leq 0.
  \end{equation}
  \item There exists $\lambda \in \mathbb{R}$ such that
  \begin{equation}
    \label{eq:bangbang}
    W^\star(x,y) =
    \begin{cases}
      1        & \text{if } G^\star(x,y) > 2\lambda \\
      \in[0,1] & \text{if } G^\star(x,y) = 2\lambda \\
      0        & \text{if } G^\star(x,y) < 2\lambda.
    \end{cases}
    \quad \text{a.e.}
  \end{equation}
\end{enumerate}
\end{lemma}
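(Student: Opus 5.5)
The plan is an adjoint (Lagrangian) sensitivity argument, carried out in three steps, under \cref{ass:contraction} holding uniformly near $W^\star$.

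\textbf{Step 1: well-posedness and positivity of $\mu^\star$.} Write the adjoint equation as $\mu = \mathcal{T}(\mu)$, where
\begin{equation*}
  \mathcal{T}(\mu)(x) = \sigma'_{\beta(x)}(z^\star(x))\Bigl(1+\int_0^1 W^\star(x,y)\mu(y)\,dy\Bigr).
\end{equation*}
Using $0<\sigma'_{\beta(x)}\le \beta(x)/4$, exactly as in the proof of \cref{prop:uniqueness}, the map $\mathcal{T}$ is an affine contraction on $L^\infty([0,1])$ with constant $\kappa<1$. This gives existence and uniqueness of $\mu^\star$.

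For positivity, start the Picard iteration at $\mu_0=0$. The operator $\mathcal{T}$ is monotone because $W^\star\ge 0$ and $\sigma'>0$. Hence the iterates satisfy
\begin{equation*}
  \mu_{k}\ge \mu_1=\sigma'_{\beta(x)}(z^\star(x))>0,
\end{equation*}
so the limit obeys $\mu^\star(x)\ge\sigma'_{\beta(x)}(z^\star(x))>0$. Since $|z^\star|\le 1$, this lower bound is even uniform in $x$. This proves (i).

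\textbf{Step 2: differentiate the fixed point.} For a perturbation direction $H=V-W^\star$ with $V\in\mathcal{W}_\rho$, the set $\mathcal{W}_\rho$ is convex, so $W_\epsilon=W^\star+\epsilon H\in\mathcal{W}_\rho$ for $\epsilon\in[0,1]$. I would differentiate \eqref{eq:fixed_point} along $W_\epsilon$ using the implicit function theorem in $L^\infty$. This is the main technical obstacle: the hypothesis is that $I-D_a\mathcal{F}$ is invertible, which follows from $\|D_a\mathcal{F}\|\le\kappa<1$ by a Neumann series. The sensitivity $\delta a=\frac{d}{d\epsilon}a_\epsilon|_{0}$ then solves
\begin{equation*}
  \delta a(x)=\sigma'_{\beta(x)}(z^\star(x))\Bigl(\int_0^1 H(x,y)(a^\star(y)-\theta)\,dy+\int_0^1 W^\star(x,y)\,\delta a(y)\,dy\Bigr).
\end{equation*}

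Let $s=\sigma'_{\beta}(z^\star)$ and write $K$ for the integral operator with kernel $W^\star$, which is self-adjoint. Then $\delta a=(I-sK)^{-1}(s\,h)$ with $h(x)=\int_0^1 H(x,y)(a^\star(y)-\theta)\,dy$. Therefore
\begin{equation*}
  \delta J_1=\langle 1,\delta a\rangle=\langle (I-Ks)^{-1}1,\; s\,h\rangle.
\end{equation*}
Since $\mu^\star=s(1+K\mu^\star)$, we have $\mu^\star=s\,(I-Ks)^{-1}1$. This gives
\begin{equation*}
  \delta J_1=\int_0^1\!\int_0^1 \mu^\star(x)\,H(x,y)\,(a^\star(y)-\theta)\,dx\,dy .
\end{equation*}
Symmetrizing with $H(x,y)=H(y,x)$ yields $\delta J_1=\tfrac12\iint G^\star H$.

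\textbf{Step 3: conclude (ii) and (iii).} Local optimality of $W^\star$ gives $J_1(W_\epsilon)\le J_1(W^\star)$ for small $\epsilon>0$. The one-sided derivative is then $\le 0$, which is exactly \eqref{eq:vi}, proving (ii).

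For (iii), read \eqref{eq:vi} as saying that $W^\star$ maximizes the linear functional $V\mapsto\iint G^\star V$ over $\mathcal{W}_\rho$. This is a fractional knapsack problem on $[0,1]^2$. Choose $\lambda$ so that $2\lambda$ is the threshold level: take $2\lambda$ to be the value $t$ where $|\{G^\star>t\}|\le\rho\le|\{G^\star\ge t\}|$.

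Suppose $W^\star<1$ on a positive-measure subset of $\{G^\star>2\lambda\}$. Then $W^\star>0$ on a positive-measure subset of $\{G^\star<2\lambda\}$, or the density constraint would fail. An exchange argument then gives the contradiction: moving equal mass, symmetrically in $(x,y)$, from the second set to the first produces a $V\in\mathcal{W}_\rho$ with $\iint G^\star(V-W^\star)>0$. The symmetric case $W^\star>0$ on part of $\{G^\star<2\lambda\}$ is handled the same way. This is a Lagrange multiplier or bathtub argument and yields \eqref{eq:bangbang}. The symmetry of $G^\star$ ensures the exchanged $V$ can be taken symmetric.
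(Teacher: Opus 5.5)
Your proof is correct. It rests on the same adjoint idea as the paper, but the execution differs in useful ways.

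\textbf{The paper's route.} It works with a formal Lagrangian. Stationarity in $a$ gives an equation for a multiplier $\tilde\mu$, with $\mu^\star=\sigma'_{\beta}(z^\star)\tilde\mu$. It then reads off $\delta_W\mathcal{L}=\frac12\iint G^\star(V-W^\star)$ and gets (iii) from pointwise KKT multipliers $\alpha,\phi$ at a posited $\lambda$.

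\textbf{Your route.} You differentiate the reduced functional $W\mapsto J_1(W)$ directly, justified by the implicit function theorem with $I-sK$ inverted by a Neumann series. You then turn $\langle 1,(I-sK)^{-1}(sh)\rangle$ into $\langle\mu^\star,h\rangle$ using $(I-sK)^{-1}s=s(I-Ks)^{-1}$. This is exactly what licenses the paper's step ``optimality requires $\delta_W\mathcal{L}\le 0$'', which the Lagrangian route takes for granted.

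Your positivity argument, by monotone Picard iteration from $\mu_0=0$, is also more careful than the paper's. The paper's claim that $\tilde\mu\ge 1$ ``since the integral term is non-negative'' tacitly assumes $\tilde\mu\ge 0$.

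For (iii), your exchange (bathtub) argument constructs $\lambda$ explicitly as a quantile of $G^\star$. This replaces the formal pointwise KKT step, which in $L^\infty$ would need its own multiplier existence result. The paper's route buys brevity and makes the multiplier structure visible; yours buys a complete argument.

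\textbf{One slip in (iii).} Suppose $W^\star<1$ on a positive-measure part of $\{G^\star>2\lambda\}$. You cannot conclude that $W^\star>0$ on a positive-measure part of $\{G^\star<2\lambda\}$ when the level set $\{G^\star=2\lambda\}$ has positive measure, because the missing mass may sit on the level set.

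What the density constraint actually gives is $\int_{\{G^\star\le 2\lambda\}}W^\star>0$. This follows from $\int_{\{G^\star>2\lambda\}}W^\star<|\{G^\star>2\lambda\}|\le\rho$. Moving mass from $\{G^\star\le 2\lambda\}$ to $\{G^\star>2\lambda\}$ still strictly increases $\iint G^\star V$, so your conclusion stands.

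Symmetrically, in the second case take the receiving set to be $\{G^\star\ge 2\lambda\}$. Then $\int_{\{G^\star\ge 2\lambda\}}W^\star<\rho\le|\{G^\star\ge 2\lambda\}|$ guarantees room there.
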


\begin{proof}
We enforce the fixed-point equation~\eqref{eq:fixed_point}
pointwise via a multiplier function $\mu(x)$ and the
density constraint via a scalar multiplier $\lambda$,
giving the Lagrangian functional
\begin{multline}
  \label{eq:lagrangian}
  \mathcal{L}(W, a, \mu, \lambda)
  \Equaldef \int_0^1 a(x)\,dx \\
  + \int_0^1 \mu(x)\!\left[
      \sigma_{\beta(x)}\!\left(
        \int_0^1 W(x,y)(a(y)-\theta)\,dy
      \right)
      - a(x)
    \right]dx \\
  + \lambda\!\left[\rho - \int_0^1\!\int_0^1
    W(x,y)\,dx\,dy\right].
\end{multline}
The stationarity of $\mathcal{L}$ with respect to $\mu(x)$
recovers the fixed-point equation~\eqref{eq:fixed_point},
and the stationarity with respect to $\lambda$ recovers
the density constraint $\iint W^\star\,dx\,dy = \rho$.

Taking the Fr\'{e}chet derivative of $\mathcal{L}$ with
respect to $a$ in the direction $\delta a \in L^\infty$, linearizing the logit function, and 
setting the first variation to zero, the symmetry of $W^\star$ yields
\begin{equation}
  \label{eq:mu_eq}
  \mu(x)
  = 1 + \int_0^1 \mu(y)\,
    \sigma_{\beta(y)}'\big(z^\star(y)\big)\,
    W^\star(x,y)\,dy
  \quad \text{a.e.}
\end{equation}
Under \cref{ass:contraction}, \eqref{eq:mu_eq} has a
unique solution, which we denote by $\tilde{\mu}(x)$. Since the integral term is
non-negative, $\tilde{\mu}(x) \geq 1 > 0$ for all $x$.
Defining the adjoint variable
\begin{equation}
  \mu^\star(x)
  \;\Equaldef\;
  \tilde{\mu}(x)\,\sigma_{\beta(x)}'\big(z^\star(x)\big)
\end{equation}
and multiplying both sides of~\eqref{eq:mu_eq} by
$\sigma_{\beta(x)}'\big(z^\star(x)\big)$ yields~\eqref{eq:adjoint}.
Positivity $\mu^\star(x) > 0$ follows from
$\tilde{\mu}(x) \geq 1$ and $\sigma_{\beta(x)}'(z) > 0$
for all $z \in \mathbb{R}$, proving~(i).

Perturbing $W^\star$ towards any $V \in \mathcal{W}_\rho$
gives
\begin{multline}
  \delta_W\mathcal{L}
  = \int_0^1\!\int_0^1
    \mu^\star(x)\big(a^\star(y)-\theta\big)
    (V-W^\star)(x,y)\,dy\,dx \\
  - \lambda\int_0^1\!\int_0^1
    (V-W^\star)(x,y)\,dx\,dy.
\end{multline}
Since $V - W^\star$ is symmetric, we can write
\begin{multline}
  \int_0^1\!\int_0^1
    \mu^\star(x)\big(a^\star(y)-\theta\big)
    (V-W^\star)(x,y)\,dy\,dx \\
  = \frac{1}{2}\int_0^1\!\int_0^1
    G^\star(x,y)\,(V-W^\star)(x,y)\,dx\,dy.
\end{multline}
Since $V, W^\star \in \mathcal{W}_\rho$, we have
$\iint(V - W^\star)\,dx\,dy = \rho - \rho = 0$, so
the term multiplying $\lambda$ vanishes, giving
\begin{equation}
  \delta_W\mathcal{L}
  = \frac{1}{2}\int_0^1\!\int_0^1
    G^\star(x,y)\,\big(V(x,y) - W^\star(x,y)\big)
    \,dx\,dy.
\end{equation}
Optimality of $W^\star$ requires $\delta_W\mathcal{L}
\leq 0$ for all $V \in \mathcal{W}_\rho$,
yielding~\eqref{eq:vi} and proving~(ii).

At fixed $(a^\star, \mu^\star, \lambda)$, the
Lagrangian is a linear functional of $W$:
\begin{multline}
  \tilde{\mathcal{L}}(W)
  = \frac{1}{2}\int_0^1\!\int_0^1
    G^\star(x,y)\,W(x,y)\,dx\,dy \\
  - \lambda\int_0^1\!\int_0^1 W(x,y)\,dx\,dy
  + \mathrm{constant}.
\end{multline}
Introducing pointwise multipliers $\alpha(x,y) \geq 0$
for $W(x,y) \geq 0$ and $\phi(x,y) \geq 0$ for
$1-W(x,y) \geq 0$, and differentiating with respect
to $W(x,y)$ pointwise yields the KKT stationarity
condition:
\begin{equation}
  \frac{G^\star(x,y)}{2} - \lambda
  + \alpha(x,y) - \phi(x,y)
  = 0 \quad \text{a.e.,}
\end{equation}
with complementary slackness
$\alpha(x,y)W^\star(x,y)=0$ and
$\phi(x,y)\big(1-W^\star(x,y)\big)=0$ a.e., meaning that each
multiplier is nonzero only when its associated
constraint is active. These conditions yield three
cases:
\begin{itemize}
  \item If $G^\star(x,y)/2 > \lambda$: stationarity
  requires $\phi(x,y) > \alpha(x,y) \geq 0$, so
  $\phi(x,y) > 0 \implies W^\star(x,y) = 1$.
  \item If $G^\star(x,y)/2 < \lambda$: stationarity
  requires $\alpha(x,y) > \phi(x,y) \geq 0$, so
  $\alpha(x,y) > 0 \implies W^\star(x,y) = 0$.
  \item If $G^\star(x,y)/2 = \lambda$: stationarity
  is satisfied by $\alpha(x,y) = \phi(x,y) = 0$
  for any $W^\star(x,y) \in [0,1]$.
\end{itemize}
This gives~\eqref{eq:bangbang}, proving~(iii).
\end{proof}

\section{Structure of the Optimal Graphon for
Bimodal Rationality}\label{sec:bimodal}

\Cref{lem:first_order_condition} established the
first-order optimality conditions for any rationality
profile satisfying $\beta(x) < 4$ for all
$x \in [0,1]$. We now specialize these conditions to
the bimodal rationality profile~\eqref{eq:beta_bimodal},
where $0 < \beta_H < \beta_M < 4$.

\vspace{5pt}

\begin{theorem}
\label{thm:sbm}
Let $\beta(x)$ be the bimodal rationality
profile~\eqref{eq:beta_bimodal} with
$0 < \beta_H < \beta_M < 4$, and let
$W^\star \in \mathcal{W}_\rho$ be a local maximizer
of $J_1(W)$. Then, the steady-state profile $a^\star(x)$
and adjoint $\mu^\star(x)$ are constant within each
agent type:
\begin{multline}
  \label{eq:constant_types}
  a^\star(x) =
  \begin{cases}
    a_H & \text{a.e. on } [0,\ell), \\
    a_M & \text{a.e. on } [\ell,1],
  \end{cases}
\\
  \mu^\star(x) =
  \begin{cases}
    \mu_H & \text{a.e. on } [0,\ell), \\
    \mu_M & \text{a.e. on } [\ell,1].
  \end{cases}
\end{multline}
Moreover, $W^\star$ is an SBM
graphon, i.e.,
\begin{equation}
  \label{eq:sbm}
  W^\star(x,y) =
  \begin{cases}
    w_{HH} & x,y \in [0,\ell), \\
    w_{HM} & (x,y) \in [0,\ell)\times[\ell,1]
             \cup [\ell,1]\times[0,\ell), \\
    w_{MM} & x,y \in [\ell,1],
  \end{cases}
\end{equation}
with block values $w_{HH}, w_{HM}, w_{MM} \in [0,1]$
determined by the
condition~\eqref{eq:bangbang}. The function
$G^\star(x,y)$ defined in~\eqref{eq:switching} takes
the following values on the corresponding blocks:
\begin{align}
  G_{MM} &\;\Equaldef\; 2\mu_M(a_M - \theta),
           \label{eq:gmm} \\
  G_{HM} &\;\Equaldef\; \mu_M(a_H-\theta)
           + \mu_H(a_M-\theta), \label{eq:ghm} \\
  G_{HH} &\;\Equaldef\; 2\mu_H(a_H - \theta).
           \label{eq:ghh}
\end{align}
\end{theorem}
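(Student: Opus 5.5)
The plan is to reduce everything to showing that the local externality $z^\star(x)=\int_0^1 W^\star(x,y)(a^\star(y)-\theta)\,dy$ and the auxiliary quantity $m^\star(x)\Equaldef\int_0^1 W^\star(x,y)\mu^\star(y)\,dy$ are constant within each type. Once this holds, \eqref{eq:fixed_point} gives $a^\star(x)=\sigma_{\beta(x)}(z^\star(x))$ and \eqref{eq:adjoint} gives $\mu^\star(x)=\sigma'_{\beta(x)}(z^\star(x))(1+m^\star(x))$. Because $\beta$ is constant on $[0,\ell)$ and on $[\ell,1]$ by \eqref{eq:beta_bimodal}, both $a^\star$ and $\mu^\star$ are then constant on each type, which is \eqref{eq:constant_types}. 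Substituting these constants into \eqref{eq:switching} shows directly that $G^\star$ is piecewise constant on the three blocks, with the values \eqref{eq:gmm}--\eqref{eq:ghh}.

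The SBM form then follows from \cref{lem:first_order_condition}(iii). On each block, $G^\star$ is a single constant, and it is compared with the single threshold $2\lambda$. So $W^\star$ equals $1$ a.e. or $0$ a.e. on that block, unless $G_{\cdot\cdot}=2\lambda$ there. In the tie case I would replace $W^\star$ on that block by its block average $w_{\cdot\cdot}$. This replacement preserves symmetry and the density $\rho$. I would then check that it leaves $(a^\star,\mu^\star)$ a solution of \eqref{eq:fixed_point}--\eqref{eq:adjoint}, since the integrals against constant-per-type $a^\star,\mu^\star$ only see block averages of the row integrals. Strictly speaking this needs the row averages to be constant, which is again the constancy of $z^\star,m^\star$. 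Uniqueness of the fixed point (\cref{prop:uniqueness_simple}) then shows that $J_1$ is unchanged, so the block-averaged $W^\star$ is an SBM maximizer.

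The core step, and the main obstacle, is proving within-type constancy of $z^\star$ (and $m^\star$) at a local maximizer. My first attempt is an exchange/symmetrization argument. Suppose that on a positive-measure subset of one type $z^\star$ is not a.e. constant. Construct a measure-preserving rearrangement within the type, i.e. average $W^\star$ over within-type relabelings, so that rows of agents with high and low $z^\star$ are mixed. By \cref{lem:first_order_condition}(ii), the first variation in this admissible direction is $\le 0$. By the symmetry of the construction it is actually $=0$, so second-order information is needed. I would compute the second variation of $J_1$ along the mixing direction. The fixed-point operator is a contraction, so the response $\delta a$ is controlled, and the sign should come from the curvature of $\sigma_\beta$ weighted by $\mu^\star$. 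I would try to show this strictly increases $J_1$ when $z^\star$ is non-constant, contradicting local maximality. If the curvature sign is not uniform, I would instead fall back on the bang-bang structure. Within a type, $G^\star(x,y)$ is affine in $(\mu^\star(x),a^\star(x))$, so each row of $W^\star$ is a threshold set. A monotonicity argument, namely that larger $z^\star$ gives larger $a^\star$ and hence a larger neighbourhood and hence larger $z^\star$, combined with the contraction constant $\kappa<1$, would then be used to show that the induced map on the within-type profile of $z^\star$ has a unique, hence type-constant, fixed point.
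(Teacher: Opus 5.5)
\paragraph{Your reduction is fine; the core step is the gap.} Once $z^\star$ and $m^\star$ are type-constant, \eqref{eq:fixed_point} and \eqref{eq:adjoint} make $a^\star$ and $\mu^\star$ type-constant. Then $G^\star$ takes the values \eqref{eq:gmm}--\eqref{eq:ghh}, and \eqref{eq:bangbang} pins $W^\star$ to $0$ or $1$ a.e.\ on every non-tie block. You are also right that on a tie block only the block-averaged graphon is an SBM.

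The gap is the core step, within-type constancy at a local maximizer, and neither of your routes closes it.

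In route (1), the first variation toward the symmetrization $\bar W$ is not zero ``by symmetry''. Since $\bar W-W^\star=\int_\Pi (W^\pi-W^\star)\,d\nu(\pi)$ and each $\pi$ preserves measure, this variation equals $\frac12\iint(\bar G-G^\star)\,W^\star\,dx\,dy$, where $\bar G(x,y)=\int_\Pi G^\star(\pi^{-1}x,\pi^{-1}y)\,d\nu(\pi)$.
\begin{itemize}
\item By \cref{lem:first_order_condition}(ii) it is $\le 0$. It is strictly negative whenever $\bar W$ is not itself a maximizer of $V\mapsto\iint G^\star V$ over $\mathcal W_\rho$, which is the generic case when $G^\star$ varies inside a block.
\item A strictly negative slope toward $\bar W$ is fully compatible with local maximality, so no second-variation computation can give a contradiction.
\item More basically, $\bar W$ is not near $W^\star$, so local maximality says nothing about $J_1(\bar W)$.
\end{itemize}

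Route (2) points the wrong way. ``Larger $z^\star\Rightarrow$ larger $a^\star\Rightarrow$ more neighbours $\Rightarrow$ larger $z^\star$'' is a self-reinforcing mechanism that favours symmetry breaking, not uniqueness. Also, $\kappa<1$ is a contraction constant for a fixed $W$. It says nothing about the composite map in which rows of $W$ are re-chosen as threshold sets of $G^\star$. That map is discontinuous in $(a,\mu)$, coupled across rows by the density constraint, and non-monotone through $\sigma_\beta'$.

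\paragraph{The missing step cannot be supplied.} Within-type constancy is not a consequence of optimality. Take the paper's parameters $\ell=0.5$, $\beta_H=2$, $\beta_M=3.99$, $\theta=0.3$, $\rho=0.1$. Let $W=1$ on $S\times S$ for a set $S\subset[\ell,1]$ of machines with $|S|=\sqrt{0.1}$, and $W=0$ elsewhere.
\begin{itemize}
\item Then $a_\infty=\sigma_{\beta_M}(\sqrt{0.1}\,(a_\infty-\theta))\approx 0.5907$ on $S$ and $a_\infty=\tfrac12$ elsewhere, so $J_1\approx 0.5287$.
\item Any SBM of density $0.1$ has $D_H+D_M=0.2$. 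Using $\sigma_\beta(z)-\tfrac12\le \beta z/4$ for $z\ge 0$ gives $\max(a_H,a_M)\le 0.5499$ and then $J_1\le 0.5250$.
\end{itemize}
Hence no SBM is optimal at this $\rho$. Moreover, the within-type symmetrization of the clique graphon, which is exactly the table's SBM with $w_{MM}=0.4$, strictly lowers $J_1$.

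\paragraph{Comparison with the paper's route.} The paper argues globally rather than locally. It averages $W^\star$ over within-type relabelings and claims $J_1(\bar W)\ge J_1(W^\star)$, using Jensen (concavity of $\sigma_\beta$ on $z\ge 0$) and a monotone fixed-point induction. It then replaces $W^\star$ by $\bar W$ ``without loss of generality''. Even if this worked, it would only give an SBM at least as good as $W^\star$, not constancy for $W^\star$ itself.

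The example above refutes that inequality as well. The paper's induction uses $a^{(k)}_{W^\pi}=a^{(k)}_{W^\star}$, but only $a^{(k)}_{W^\pi}=a^{(k)}_{W^\star}\circ\pi$ holds. The Jensen gain is only cubic in $z$, because $\sigma_\beta-\tfrac12$ is odd. Meanwhile, for graphons supported on the machine block, equalizing degrees already lowers the first feedback term $(\tfrac12-\theta)(\beta_M/4)^2\int_0^1 D(x)^2\,dx$ of the linearized response.

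What remains provable is the conditional part of your argument: type-constant $z^\star$ and $m^\star$ imply the SBM and water-filling structure.
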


\vspace{5pt}

\begin{proof}
Since $\beta(x)$ takes only two values, the population
is partitioned into two exchangeable groups: humans
$x \in [0,\ell)$ and machines $x \in [\ell,1]$. For
any measure-preserving permutation $\pi:[0,1]\to[0,1]$
satisfying $\pi([0,\ell))=[0,\ell)$ and
$\pi([\ell,1])=[\ell,1]$ a.e., the relabeled graphon
$W^\pi(x,y) \Equaldef W^\star(\pi(x),\pi(y))$
satisfies $J_1(W^\pi) = J_1(W^\star)$. Indeed, since
$\beta(\pi(x)) = \beta(x)$, the unique fixed point of
$W^\pi$ is $\tilde{a}(x) = a^\star(\pi(x))$ by
\cref{prop:uniqueness}, and measure-preservation gives
$J_1(W^\pi) = \int_0^1 a^\star(\pi(x))\,dx =
\int_0^1 a^\star(x)\,dx = J_1(W^\star)$.

Define the symmetrized graphon:
\begin{equation}
  \bar{W}(x,y)
  \;\Equaldef\;
  \int_\Pi W^\star\big(\pi(x),\pi(y)\big)\,d\nu(\pi),
\end{equation}
where $\Pi$ denotes the set of all such permutations
and $\nu$ is a probability measure on $\Pi$. By
construction, $\bar{W}(x,y)$ depends only on the types
of $x$ and $y$, so $\bar{W} \in \mathcal{W}_\rho$ is
an SBM graphon.

We now show $J_1(\bar{W}) \geq J_1(W^\star)$ using an
inductive argument on the fixed-point iteration. For
any fixed $a \in \mathcal{M}$ and a.e.\
$x \in [0,1]$, the map $W \mapsto \mathcal{F}_W(a)(x)
= \sigma_{\beta(x)}\!\left(\int_0^1
W(x,y)(a(y)-\theta)\,dy\right)$ is concave in $W$. Indeed,
the argument $z_W(x) = \int_0^1
W(x,y)(a(y)-\theta)\,dy$ is linear in $W$, and
$\sigma_{\beta(x)}$ is concave in $z$ for $z > 0$
(since $\theta < 1/2$ implies $z^\star > 0$).
Therefore, by Jensen's inequality:
\begin{equation}
  \mathcal{F}_{\bar{W}}(a)(x)
  \geq \int_\Pi
    \mathcal{F}_{W^\pi}(a)(x)\,d\nu(\pi)
  \quad \text{a.e.}
\end{equation}
Initialize $a^{(0)} \equiv 1/2$. Since
$J_1(W^\pi) = J_1(W^\star)$ for all $\pi$ and the
fixed-point iteration converges to a unique limit
under \cref{ass:contraction}, we have
$a^{(k)}_{W^\pi}(x) = a^{(k)}_{W^\star}(x)$ for all
$\pi$ and all $k \geq 0$. By induction, using
monotonicity of $\mathcal{F}$ in $a$ and concavity
of $\mathcal{F}$ in $W$:
\begin{multline}
  a^{(k+1)}_{\bar{W}}(x)
  = \mathcal{F}_{\bar{W}}(a^{(k)}_{\bar{W}})(x)
  \geq \mathcal{F}_{\bar{W}}(a^{(k)}_{W^\star})(x) \\
  \geq \int_\Pi
    \mathcal{F}_{W^\pi}(a^{(k)}_{W^\star})(x)
    \,d\nu(\pi)
  = a^{(k+1)}_{W^\star}(x).
\end{multline}
Taking $k \to \infty$ and using convergence under
\cref{ass:contraction}:
\begin{equation}
  a_\infty(\bar{W})(x)
  \geq a_\infty(W^\star)(x) \quad \text{a.e.}
\end{equation}
Integrating gives $J_1(\bar{W}) \geq J_1(W^\star)$.
Since $W^\star$ is a local maximizer and $\bar{W}$
achieves at least the same objective value, we may
work with $\bar{W}$ without loss of generality
throughout the remainder of the proof.

Since $\bar{W}$ is an SBM, its local externality
$\bar{z}(x) = \int_0^1 \bar{W}(x,y)
(\bar{a}(y)-\theta)\,dy$ is constant within each
type, so the steady state $\bar{a}(x) =
\sigma_{\beta(x)}(\bar{z}(x))$ is piecewise constant
on $[0,1]$. Since $\bar{z}(x)$ is
constant within each type, $\sigma_{\beta(x)}'
(\bar{z}(x))$ is also constant within each type, and
the adjoint equation~\eqref{eq:adjoint} then gives
$\mu^\star(x)$ constant within each type. This
establishes~\eqref{eq:constant_types}.

Finally, substituting~\eqref{eq:constant_types} into
 function~\eqref{eq:switching},
$G^\star(x,y)$ takes exactly three values on the
blocks $[0,\ell)\times[0,\ell)$,
$([0,\ell)\times[\ell,1])\cup([\ell,1]\times[0,\ell))$,
and $[\ell,1]\times[\ell,1]$, given
by~\eqref{eq:gmm}--\eqref{eq:ghh}. The 
condition~\eqref{eq:bangbang} then assigns a constant
value $w_{ij} \in [0,1]$ to each block, so $W^\star$
is an SBM as in~\eqref{eq:sbm}.
\end{proof}

\section{Greedy Edge Density Allocation}
\label{sec:allocation}

The implication of \cref{thm:sbm} is that every local maximizer of $J_1$
over $\mathcal{W}_\rho$ is an SBM graphon.
Therefore, the global maximizer is also an
SBM, and the search over $\mathcal{W}_\rho$ can be restricted to SBMs without loss of optimality. The optimization problem then becomes to allocate the connectivity budget $\rho$ across the three blocks: machine-machine
(MM), human-machine (HM), and human-human
(HH) with masses $(1-\ell)^2$, $2\ell(1-\ell)$,
and $\ell^2$ respectively. 

By 
condition~\eqref{eq:bangbang}, a block $ij$ is fully
active ($w_{ij}=1$) when $G_{ij} > 2\lambda$,
partially active ($w_{ij} \in (0,1)$) when
$G_{ij} = 2\lambda$, and inactive ($w_{ij}=0$) when
$G_{ij} < 2\lambda$, where $\lambda$ is determined
by the density constraint $\iint W^\star\,dx\,dy =
\rho$. Therefore, as $\rho$ increases from $0$ to
$1$, blocks are activated in decreasing order of
their value $G_{ij}$.

Given the switching values $G_{MM}$, $G_{HM}$,
$G_{HH}$, rank the three blocks in decreasing order:
\begin{equation}
  G_{(1)} \geq G_{(2)} \geq G_{(3)},
\end{equation}
with corresponding masses $m_{(1)}$, $m_{(2)}$,
$m_{(3)}$. The constant $\lambda$ is the unique
value determined by identifying the marginal block
index $k^\star \in \{1,2,3\}$ satisfying:
\begin{equation}
  \label{eq:lambda}
  \sum_{k=1}^{k^\star-1} m_{(k)}
  < \rho
  \leq \sum_{k=1}^{k^\star} m_{(k)},
\end{equation}
so that $\lambda = G_{(k^\star)}/2$. The optimal
block values are then:
\begin{equation}
  \label{eq:allocation}
  w_{(k)} =
  \begin{cases}
    1
      & k < k^\star, \\
    \frac{1}{m_{(k^\star)}}\left(\rho - \displaystyle\sum_{i=1}^{k^\star-1}
           m_{(i)}\right)
      & k = k^\star, \\
    0
      & k > k^\star.
  \end{cases}
\end{equation}

\begin{remark}
The allocation~\eqref{eq:allocation} is akin to a
\emph{water-filling} algorithm \cite{cover2006elements}: the edge connectivity budget
$\rho$ fills blocks in order of their value $G_{ij}$, with the Lagrange multiplier
$\lambda$ playing the role of the water level.
Unlike classical water-filling where the objective
is concave and solutions are globally optimal, our problem is nonconvex. Therefore, the allocation
satisfies the KKT conditions of \cref{lem:first_order_condition}, and is 
a locally optimal SBM solution.
\end{remark}

\section{Numerical Results}
\label{sec:numerical}

We illustrate the optimal allocation algorithm of
\cref{sec:allocation} on a stochastic block
model with machines ($\beta_M = 3.99$) and humans
($\beta_H = 2$), coordination game parameter $\theta = 0.3$,
and equal population mass $\ell = (1-\ell)=0.5$. The block
masses are $m_{MM} = 0.25$, $m_{HM} = 0.50$,
$m_{HH} = 0.25$. In our numerical experiments, the machine rationality parameters are set to $\beta_M = 3.99$ and $\beta_H=2$ to denote that machines react more sharply to network externalities than humans in the graphon coordination game. The fact that $\beta_M<4$ guarantees that the equilibrium $a_\infty$ is unique. 

\subsection{Allocation and adoption across density budgets}

Assuming the ordering
$G_{MM} \geq G_{HM} \geq G_{HH}$ and applying the greedy
algorithm for $\rho \in \{0.1, 0.2, \ldots, 0.9\}$, we obtain
\cref{tab:vary_rho}, which shows the optimal block weights,
steady-state adoption probability of taking the risky action (hunting stag), and aggregate adoption~$J_1$
for each density budget. The
hypothesized ordering is self-consistent at every value
of~$\rho$, i.e., when the $G_{ij}$'s are computed using \eqref{eq:gmm}, \eqref{eq:ghm} and \eqref{eq:ghh}, the ordering is preserved. Therefore, a local optimum has been found.

\begin{table}[h!]
\centering
\caption{Connectivity allocation under as a function of
  the density budget~$\rho$, with
  $\ell=0.5$, $\theta=0.3$, $\beta_M=3.99$,
  $\beta_H=2$. The marginal block is marked
  with~$^\star$.}
\label{tab:vary_rho}
\begin{tabular}{|c|ccc|cc|c|}
\hline
$\rho$
  & $w_{MM}$ & $w_{HM}$ & $w_{HH}$
  & $a_M$ & $a_H$
  & $J_1$ \\
\hline
\hline
0.1
  & $0.400^\star$ & 0 & 0
  & 0.550 & 0.500 & 0.525 \\
0.2
  & $0.800^\star$ & 0 & 0
  & 0.628 & 0.500 & 0.564 \\
0.3
  & 1 & $0.100^\star$ & 0
  & 0.697 & 0.510 & 0.604 \\
0.4
  & 1 & $0.300^\star$ & 0
  & 0.731 & 0.532 & 0.631 \\
0.5
  & 1 & $0.500^\star$ & 0
  & 0.766 & 0.558 & 0.662 \\
0.6
  & 1 & $0.700^\star$ & 0
  & 0.803 & 0.587 & 0.695 \\
0.7
  & 1 & $0.900^\star$ & 0
  & 0.838 & 0.619 & 0.729 \\
0.8
  & 1 & 1 & $0.200^\star$
  & 0.861 & 0.653 & 0.757 \\
0.9
  & 1 & 1 & $0.600^\star$
  & 0.873 & 0.692 & 0.782 \\
\hline
\end{tabular}
\end{table}

The water-filling algorithm proceeds as follows. The
machine-machine block receives all the budget until it
saturates at $\rho = m_{MM} = 0.25$, then the
human-machine block fills from $\rho = 0.25$ to
$\rho = m_{MM} + m_{HM} = 0.75$, and beyond
that value the human-human block receives the remaining
edges. These two thresholds correspond to phase
transitions in the optimal allocation, where the
marginal block shifts from~MM to~HM and from~HM
to~HH, respectively.
\cref{fig:waterfilling} displays the block weights
as continuous functions of~$\rho$, with shaded
regions indicating the marginal block and dashed
vertical lines at the phase transitions.

\begin{figure}[t]
  \centering
  \includegraphics[width=\columnwidth]{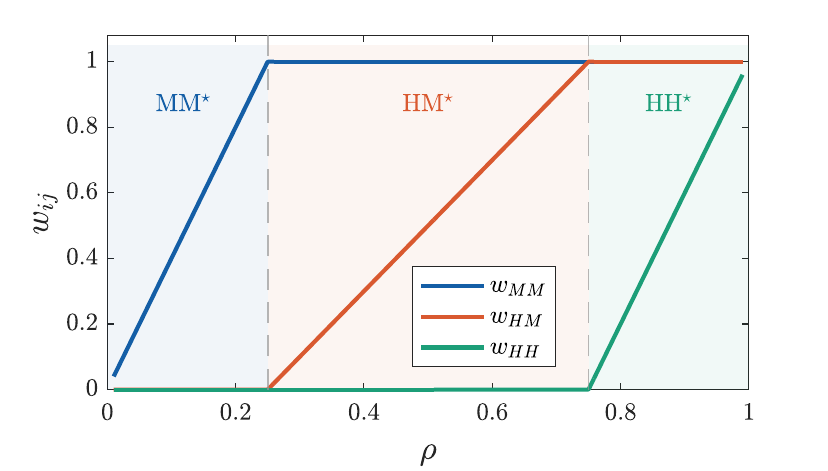}
  \caption{Optimal block weights $w_{MM}$, $w_{HM}$,
    $w_{HH}$ as a function of the density
    budget~$\rho$ for $\ell=0.5$. Phase
    transitions occur at $\rho = 0.25$ and
    $\rho = 0.75$.}
  \label{fig:waterfilling}
\end{figure}

The machine's adoption probability increases as the MM block
becomes more connected, reaching $a_M = 0.873$ by $\rho = 0.9$ from
a baseline of~$0.5$\footnote{The baseline of $0.5$ corresponds to completely irrational behavior, where the agent makes a decision by flipping a fair coin.}. Human adoption remains at the
baseline until the HM block begins to fill at
$\rho = 0.25$, after which it increases with $\rho$ until $a_H=0.692$. Notice that the HH block
does not receive links
until $\rho = 0.75$, which means that it is better to connect humans to the more rational machines before connecting humans to humans. At $\rho = 0.9$, aggregate
adoption reaches $J_1 = 0.78$, a gain of $28$
percentage points over a completely disconnected baseline, i.e., no externalities.
The quantities $a_M$, $a_H$, and $J_1$ are plotted as functions of~$\rho$ in \cref{fig:adoption}.

\begin{figure}[t]
  \centering
  \includegraphics[width=\columnwidth]{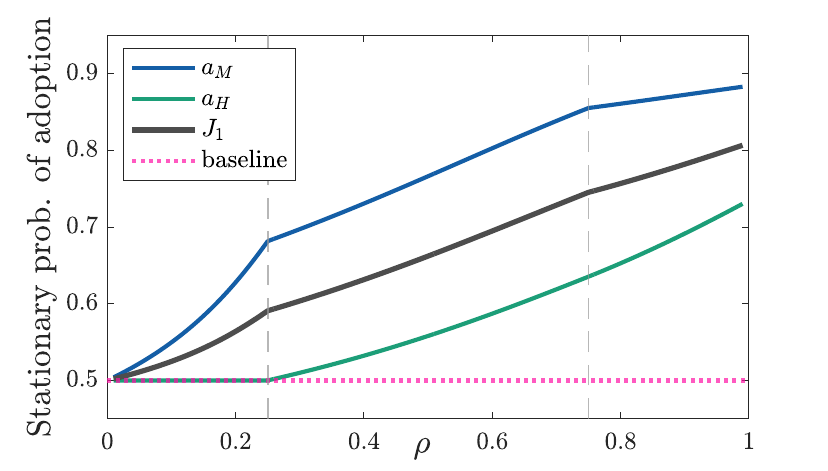}
  \caption{Steady-state activities $a_M$, $a_H$, and
    aggregate adoption $J_1$ under the optimal
    allocation.  The horizontal dotted line marks the
    baseline $a = 0.5$.}
  \label{fig:adoption}
\end{figure}

\subsection{Prioritizing human connectivity rather than machines}

To quantify the value of the optimality conditions,
\cref{fig:comparison} compares the optimal allocation
with two alternatives that use the same density
budget~$\rho$: a uniform (egalitarian) allocation
($w_{MM} = w_{HM} = w_{HH} = \rho$) and an
allocation that fills blocks in the order
$G_{HH} \geq G_{HM} \geq G_{MM}$, prioritizing
human-human connections.

\begin{figure}[t]
  \centering
  \includegraphics[width=\columnwidth]{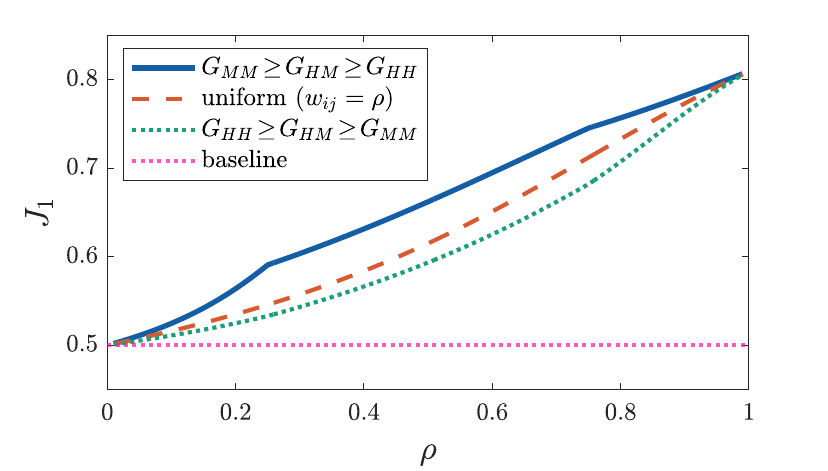}
  \caption{Aggregate adoption $J_1$ as a function
    of~$\rho$ under three allocation strategies:
    prioritizing machine connections ($G_{MM}\geq G_{MH} \geq G_{HH}$), uniform ($w_{MM}= w_{MH}= w_{HH}=\rho$), and prioritizing human connections $G_{HH}\geq G_{MH} \geq G_{MM}$.}
  \label{fig:comparison}
\end{figure}

The allocation that prioritizes machine connectivity strictly dominates both
alternatives for all $\rho \in (0,1)$. At
$\rho = 0.5$, the optimal allocation achieves
$J_1 = 0.662$, compared to $0.616$ for the uniform
allocation and $0.594$ for the reversed allocation.
The gap is largest at intermediate values of $\rho$. At very low
$\rho$ all blocks are mostly disconnected, and at $\rho$ near~$1$ all blocks are
completely connected regardless of priority ordering. Moreover, the allocation that prioritizes human connectivity is only self-consistent at one of the
nine grid points tested
($\rho = 0.6$), which confirms that the priority ordering
$G_{MM} \geq G_{HM} \geq G_{HH}$ is likely the sole
consistent regime for the chosen parameters.

\vspace{5pt}

\begin{remark}
  For the parameter values considered, the ordering
  $G_{MM} \geq G_{HM} \geq G_{HH}$ is the only
  self-consistent water-filling priority pattern across all
  tested values of $\rho$ and $\ell$. The
  alternative permutations of $(G_{MM}, G_{HM},
  G_{HH})$ fail the consistency verification step in all cases we have tried. This observation also reflects the large gap in rationality with $\beta_M \approx 2\beta_H$, which implies that machines respond to network externalities more strongly than humans, making MM connections
  a higher-return investment for the system designer interested in promoting coordination.
\end{remark}

\subsection{Finite human-machine network}

\cref{fig:graphon_network} illustrates the (locally) optimal SBM
graphon and a finite network  sampled from it for
$\rho = 0.35$ and $\ell = 0.5$. The graphon
(\cref{fig:graphon_network}, top) is a piecewise
constant $2 \times 2$ block matrix with
$w_{MM} = 1.00$, $w_{HM} = 0.20$, and
$w_{HH} = 0.00$. The sampled network
(\cref{fig:graphon_network}, bottom) consists of
$n = 20$ nodes ($10$ machines and $10$
humans) with edges drawn independently according to
the graphon probabilities. The resulting graph
exhibits the structure from the optimal graphon: a complete machine
sub-network ($45$ edges), a sparse cross-type
connections ($23$ out of $100$ possible human-machine
edges), and no human-human connections.

\begin{figure}[t]
  \centering
  \begin{minipage}[t]{0.38\textwidth}
    \centering
    \includegraphics[width=\columnwidth]{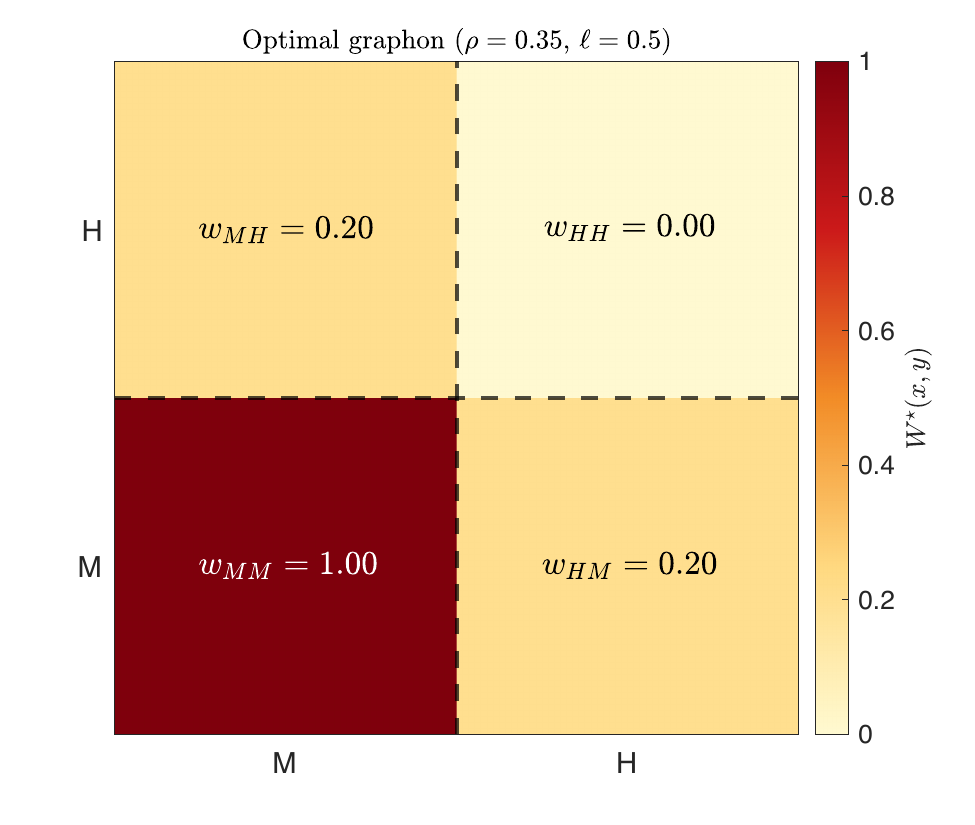}
  \end{minipage}
  \hfill
  \begin{minipage}[t]{0.5\textwidth}
    \centering
    \includegraphics[width=0.9\textwidth]{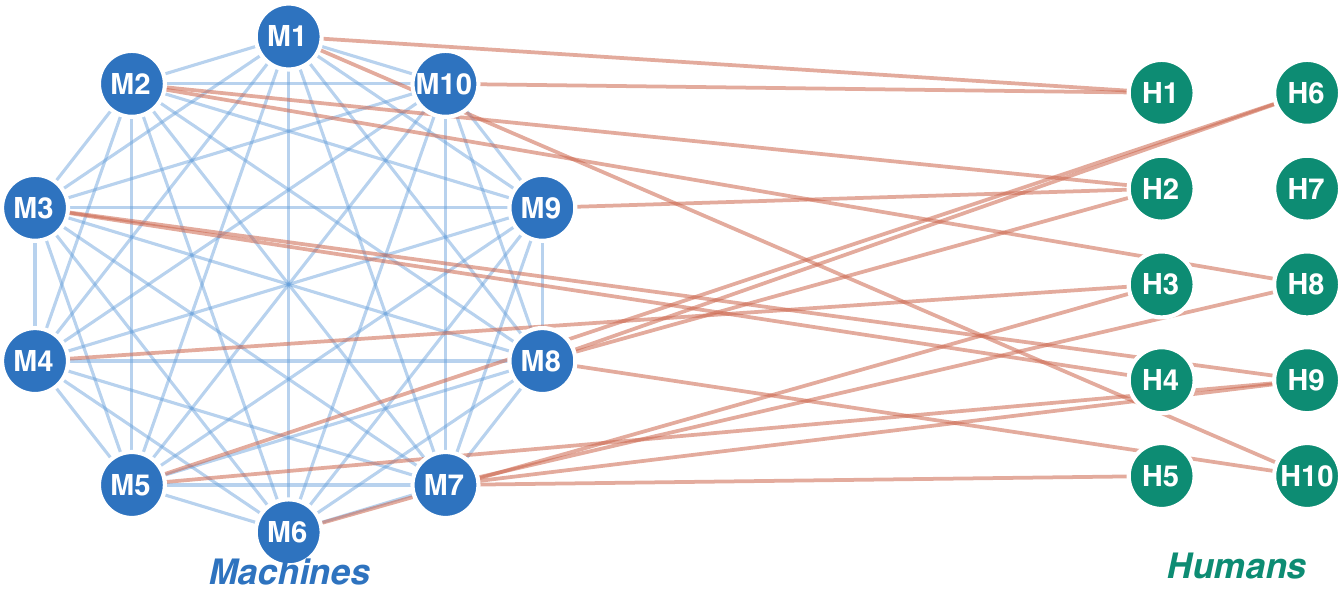}
  \end{minipage}
  \caption{Left: optimal graphon $W^\star$ for
    $\rho = 0.35$, $\ell = 0.5$. Right: a network
    of $n = 20$ nodes sampled from $W^\star$.
    Blue edges connect machines, coral edges link
    machines to humans, and the human-human block is
    empty.}
  \label{fig:graphon_network}
\end{figure}

\section{Conclusions}
\label{sec:conclusions}

We studied the problem of designing a network topology to
maximize coordination in a human-machine population with heterogeneous
bounded rationalities. Modeling the population as a continuum
of agents playing the stag hunt game under logit learning
dynamics, we formulated the problem as an infinite-dimensional
optimization over the space of graphons with a fixed edge
density.

Our main contributions are threefold. First, we derived
variational first-order optimality conditions for the graphon
design problem. Second,
we proved that the optimal
steady-state and adjoint functions are constant within each
type, and consequently that for a
bimodal rationality profile there exists an optimal graphon within the ensemble of stochastic block
models. Third, we obtained a greedy algorithm akin to water-filling to allocate the edge densities across blocks. 

Several research directions remain open. The global optimality of the SBM over the entire class $\mathcal{W}_\rho$ is conjectured but is not yet proved. Extension to continuous rationality profiles $\beta(x)$ requires characterizing the level sets of the $G(x,y)$ function in the two-dimensional  space $(\beta(x),a^\star(x))$, leading to more complex optimal graphons. Finite-sample performance guarantees and distributed algorithms converging to the optimal topology are also important next steps.

\bibliography{references}
\bibliographystyle{ieeetr}

\end{document}